\documentclass[]{llncs}
\usepackage[utf8]{inputenc}
\usepackage{amsmath}
\usepackage{etoolbox}
\usepackage{a4wide}

\usepackage{amsthm}

\usepackage{amsfonts}
\usepackage{graphicx}
\usepackage{tikz}
\usetikzlibrary{calc}
\usetikzlibrary{positioning}
\usetikzlibrary{arrows}
\usetikzlibrary{matrix}

\usepackage{makecell}
\usepackage[ruled,vlined]{algorithm2e}
\DontPrintSemicolon

\tikzstyle{robot}=[circle, draw, fill=black!80,%
                        inner sep=0pt, minimum width=4pt]

\tikzstyle{dest}=[circle, draw, fill=white,%
                        inner sep=0pt, minimum width=4pt]

\newcommand{\R}{\mathbb{R}}
\newcommand{\Z}{\mathbb{Z}}
\newcommand{\N}{\mathbb{N}}
\newcommand{\ie}{\emph{i.e.}}

\title{Stand Up Indulgent Rendezvous\thanks{This work was partially funded by the ANR project SAPPORO, ref. 2019-CE25-0005-1.}}

\author{Quentin Bramas\inst{1} \and Anissa Lamani\inst{1} \and Sébastien Tixeuil\inst{2}}
\institute{ICUBE, Strasbourg University, CNRS, France \and
 Sorbonne University, CNRS, LIP6, France}

\let\oldparagraph\paragraph
\renewcommand{\paragraph}[1]{\oldparagraph{\textbf{#1}}}

\begin{document}

\maketitle

\begin{abstract}
We consider two mobile oblivious robots that evolve in a continuous Euclidean space. We require the two robots to solve the rendezvous problem (meeting in finite time at the same location, not known beforehand) despite the possibility that one of those robots crashes unpredictably. The rendezvous is stand up indulgent in the sense that when a crash occurs, the correct robot must still meet the crashed robot on its last position.

We characterize the system assumptions that enable problem solvability, and present a series of algorithms that solve the problem for the possible cases. 
\end{abstract}

\section{Introduction}
The study of swarm robotics in Distributed Computing has focused on the computational power of a set of autonomous robots evolving in a bidimensional Euclidean space. In this setting, a robot is modeled as a point in a two dimensional plane and has its own coordinate system and unit distance. Robots are usually assumed to be very weak : they are \emph{(i)} anonymous (they can not be distinguished), \emph{(ii)} uniform (they execute the same algorithm) and, \emph{(iii)} oblivious (they cannot remember past actions). Robots operate in cycles that comprise three phases: Look, Compute and Move. During the first phase (Look), a robot takes a snapshot to see the position of the other robots. During the second phase (Compute), a robot decides to move or stay idle. In the case in which it decides to move, it computes a target destination. In the last phase (Move), a robot moves to the computed destination (if any). Depending on how robots are activated and the synchronization level, three models have been introduced: Fully synchronous model (FFSYNC) in which robots are activated simultaneously and execute cycles synchronously. Semi synchronous model (SSYNC) in which a subset of robots are activated simultaneously. The activated robots execute a cycle synchronously. The asynchronous model (ASYNC) in which there is no global clock. The duration of each phase is finite but unbounded. That is, one robot can decide to move according to an outdated view.

Among the various problems considered under such weak assumptions there is the gathering problem which is one of the benchmarking tasks in mobile robot networks. The gathering task consists in all robots reaching a single point, not known beforehand, in finite time. The particular case of gathering two robots is called \emph{rendezvous} in the literature. In this paper, we consider the Stand Up Indulgent Rendezvous (SUIR) problem: in the case one of the two robots crashes, they still have to gather (obviously, at the position of the crashed robot); if no robot crashes, the robots are expected to gather in finite time. The SUIR problem is at least as difficult as the rendezvous problem, so classical impossibility results still apply.

\paragraph{Related works.}

A foundational result~\cite{SuzukiY99} shows that when robots operate in a fully synchronous manner, the rendezvous can be solved deterministically, while if robots are allowed to wait for a while (this is the case \emph{e.g.} in the SSYNC model), the problem becomes impossible without additional assumptions. Such additional assumptions include the robots executing a probabilistic protocol~\cite{DefagoGMP06,DefagoPCMP16} (but the rendezvous only occurs probabilistically), the robots sharing a common $x-y$ coordinate system~\cite{SuzukiY99} or an approximation of a common coordinate system~\cite{izumi12siam}, or the robots being endowed with persistent memory~\cite{0001FPSY16,HeribanDT18,SuzukiY99,Viglietta13,OkumuraWK17}. Recent work considered the minimum amount of persistent memory that is necessary to solve rendezvous~\cite{0001FPSY16,HeribanDT18,OkumuraWK17,Viglietta13}. It turns out that exactly one bit of persistent memory is necessary and sufficient~\cite{HeribanDT18} even when robots operate asynchronously and are disoriented.  

When robots can crash unexpectedly, two variants of the gathering problem can be defined~\cite{AgmonP06}: \emph{weak gathering} requires correct robots to gather, regardless of the position of crashed robots; \emph{strong gathering} requires all robots to gather at the same point. Obviously, strong gathering is only feasible if all crashed robots crash at the same location. 
Early solutions to weak gathering in SSYNC for groups of at least three robots make use of extra hypotheses: \emph{(i)} starting from a distinct configuration (that is, a configurations where at most one robot occupies a particular position), at most one robot may crash~\cite{AgmonP06}, \emph{(ii)} robots are activated one at a time~\cite{DefagoGMP06}, \emph{(iii)} robots may exhibit probabilistic behavior~\cite{DefagoPCMP16}, \emph{(iv)} robots share a common chirality (that is, the same notion of handedness)~\cite{bouzid13icdcs}, \emph{(v)} robots agree on a common direction~\cite{BCM15}. It turns out that these hypotheses are \emph{not} necessary to solve deterministic weak gathering in SSYNC, when up to $n-1$ robots may crash~\cite{bramas15wait}. 

The case of strong gathering mostly yielded impossibility results: with at most a single crash, strong gathering $n\geq3$ robots deterministically in SSYNC is impossible even if robots are executed one at a time, and probabilistic strong gathering $n\geq3$ robots is impossible with a fair scheduler~\cite{DefagoGMP06,DefagoPCMP16}. However, probabilistic strong gathering $n\geq3$ robots becomes possible in SSYNC if the relative speed of the robots is upper bounded by a constant~\cite{DefagoGMP06,DefagoPCMP16}. 

For the special case of two robots, the strong gathering problem boils down to stand up indulgent rendezvous (SUIR), as presented above. Only few results are known:
\begin{enumerate}
\item The algorithm "with probability $\frac{1}{2}$, go to the other robot position" is a probabilistic solution to SUIR in SSYNC~\cite{DefagoGMP06,DefagoPCMP16},
\item The algorithm "go to the other robot position" is a deterministic solution to SUIR in SSYNC when exactly one robot is activated at any time~\cite{DefagoGMP06,DefagoPCMP16}.
\end{enumerate}
To this paper, it is unknown whether additional assumptions (\emph{e.g.} a common coordinate system in SSYNC, or FSYNC scheduling) enable deterministic SUIR solvability.

\paragraph{Our contribution.}

In this paper, we consider the SUIR problem and concentrate at characterizing its deterministic solvability. When robots share a common $x-y$ coordinate system, rendezvous is deterministically solvable in SSYNC~\cite{SuzukiY99}: the two robots simply meet at the position of the Northernmost, Easternmost position. Our main impossibility result shows that SUIR cannot be solved deterministically in this setting. Furthermore, it remains impossible if robots have \emph{both} infinite persistent memory (this is a stronger assumption than the classical luminous robot model that permits to solve classical rendezvous in ASYNC~\cite{0001FPSY16,OkumuraWK17,HeribanDT18,Viglietta13}) and a common $x-y$ coordinate system.
This motivates our focus on the FSYNC setting, where both robots always operate synchronously. Our main positive result is that SUIR is deterministically solvable in FSYNC by oblivious disoriented robots. Our approach is constructive: we first present a simple algorithm for the case the robots share a common coordinate system, and then a more involved solution for the case of disoriented robots.

An interesting byproduct of our work is an oblivious deterministic rendezvous protocol (so, assuming no faults) for the case where robots only agree on a single axis. This complements previous results where robots agree on both axes~\cite{SuzukiY99} or approximately agree on both axes~\cite{izumi12siam}.

A summary of our results is presented in the following table.


\begin{center}
\begin{tabular}{|c|c|c|}\hline
          & Rendezvous & SUIR \\\hline
    SSYNC & \makecell{Impossible~\cite{SuzukiY99}} & \makecell{Impossible (Theorem \ref{thm:SSYNC Impossibility with lights})} \\
    oblivious, disoriented & & \\\hline    
    SSYNC & \makecell{Possible (Algorithm~\ref{algo:one common axis})} & \makecell{Impossible (Theorem \ref{thm:SSYNC Impossibility with lights})} \\
    oblivious, common $x$ axis & & \\\hline
    SSYNC & \makecell{Possible~\cite{FlocchiniPSW05,SuzukiY99}} & \makecell{Impossible (Theorem \ref{thm:SSYNC Impossibility with lights})} \\
    oblivious, common $x-y$ axes & & \\\hline
        SSYNC & \makecell{Possible~\cite{0001FPSY16,HeribanDT18,OkumuraWK17,SuzukiY99,Viglietta13}} & \makecell{Impossible (Theorem \ref{thm:SSYNC Impossibility with lights})} \\
        luminous, disoriented & & \\\hline
    
    FSYNC & Possible~\cite{balabonski19tcs,CohenP05,SuzukiY99}    & \makecell{Possible (Algorithm~\ref{algo:disoriented FSYNC})}\\
    oblivious, disoriented & & \\\hline
\end{tabular}
\end{center}

\section{Model}

We consider two robots, evolving in a Euclidean two-dimensional space. Robots are anonymous and oblivious. The time is discrete, and at each time instant, called round, a subset of the robots is activated and executes a Look-Compute-Move cycle. Each activated robot first Looks at its surroundings to retrieve the position of the other robot in its ego-centered coordinate system. Then, it Computes a target destination, based only on the current position of the other robot. Finally, it moves towards the destination following a straight path.

If the movements are \emph{rigid}, each robot always reaches its destination before the next Look-Compute-Move cycle. Otherwise, movements are \emph{non-rigid}, and an adversary can stop the robot anywhere along the path to its destination, but only after the robot traveled at least a fixed positive distance $\delta$. The value of $\delta$ is not known by the robots, and can be arbitrary small, but it does not change during the execution.

In the fully-synchronous model (FSYNC), all correct robots are activated at each round. In the \emph{Semi-synchronous} model (SSYNC), only a non-empty subset of the correct robots may be activated at each round. In this case, we consider only \emph{fair} schedules \ie, schedules where each correct robot is activated infinitely often.

\paragraph{Configurations And Local Views.}
We consider different settings that impact how a robot retrieves the position of the other robot.
Robots might agree on one or both axes of their ego-centered coordinate systems. In other words, robots may have a common North (and possibly a common East direction). They may also have different unit distance. 

For the analysis, we assume a global coordinate system $Z$ that is not accessible to the robots. A \emph{configuration} is a set $\{r_1, r_2\}$ containing the positions of both robots in $Z$. Notice that $r_i$, $i=1,2$, denotes at the same time a robot and its position in $\R^2$ in the coordinate system $Z$.

\newcommand{\IT}{\mathcal{T}}
To model the agreement of the robots about their coordinate system, we define the set $\IT$ of \emph{indistinguishable transformations}, that modify how robots see the current configuration. If robots agree on both axes and on the unit distance, then $\IT$ only contains the identity. If robots do not agree on the unit distance, then $\IT$ contains all the (uniform) scaling transformations. If robots do not agree on the $x$-axis, then $\IT$ also contains the reflection along the $y$-axis. If robots does not agree on any axis, then $\IT$ also contains all the rotations. Finally, $\IT$ is closed by composition.

We say robots are \emph{disoriented} if robots do not agree on any axis, nor on a common unit distance \ie, $\IT$ contains the rotations, scaling, reflection, and their compositions.

In a configuration $\{r_1, r_2\}$ the \emph{local view} $V_1$ of robot $r_1$ is obtained by translating the global configuration by $-r_1$ (so that $r_1$ is seen at position $(0,0)$ and $r_2$ is at position $r_2 - r_1$) from which we apply the transformation function $h_{1}$ of $r_1$. Formally, $V_1 = \{(0,0), h_{1}(r_2 - r_1)\}$. Similarly the local view of $r_2$ is $V_2 = \{ (0,0),h_{2}(r_1 - r_2)\}$, where $h_{2} \in \IT$ corresponds to the transformation function of robot $r_2$. 
Notice that the transformation function of a robot $r$ is chosen by an adversary but it does not change over time. 
  
A configuration $C$ is said to be \emph{distinct} if $|C| = 2$.

\paragraph{Configurations And Local Views in One Dimension.}
The evolving space of the robots can be naturally restricted to a one-dimensional space \ie, robots that evolve on a line. In this case, robots in a configuration $C$ correspond to points in $\R$ instead of $\R^2$. Similarly, the set of transformation functions $\IT$ contains scaling if robots do not agree on the unit distance, and contains the reflection (or equivalently the $\pi$-rotation) if robots do not agree on the orientation of the single axis (\ie, are disoriented). 

\paragraph{Algorithms And Executions.}
An algorithm $A$ is a function mapping local views to destinations.
The local view of a robot $r$ is centered and transformed by a function $h_r$, and when activated, algorithm $A$ outputs $r$'s destination $d$ in its local view. So to obtain the destination of a robot in the global coordinate system $Z$, one should apply the inverse transformation $h_r^{-1}$ \ie, the global destination is $r+h_r^{-1}(d)$.

When a non-empty subset of robots $S$ executes an algorithm $A$ in a given configuration $C$, the obtained configuration $C'$ is the smallest set satisfying\footnote{This definition works when $|C| = 2$ but can be easily generalized to larger configurations}:
\begin{align*}
    \forall r\in C\setminus S &\Rightarrow r\in C'\\
    \forall r\in C\cap S      &\Rightarrow  r+h_r^{-1}\left(A\left(\{(0,0), h_r(r'-r)\}\right)\right)\in C', \text{with $r'\in C\setminus\{r\}$}.
\end{align*}
In this case, we write $C \overset{A}{\rightarrow} C'$, and say $C'$ is obtained from $C$ by applying $A$.
We say a robot \emph{crashes} at time $t$ if it is not activated at time $t$, and never activated after time $t$ \ie, a crashed robot stops executing its algorithm and remains at the same position.

\emph{An execution} of algorithm $A$ is an infinite sequence of configurations $C_0, C_1, \ldots$ such that $C_i\overset{A}{\rightarrow} C_{i+1}$ for all $i\in \N$.
We say an execution contains one crashed robot, if one robot crashes at some round $t$.

\paragraph{The Stand Up Indulgent Rendezvous Problem.}

An algorithm solves the Stand Up Indulgent Rendezvous (SUIR) problem if, for any initial configuration $C_0$ and for any execution $C_0, C_1, \ldots$ with up to one crashed robot, there exists a round $t$ and a point $p$ such that $C_{t'} = \{p\}$ for all $t'\geq t$.

Informally, if one robot crashes, the correct robot goes to the crashed robot; if no robot crashes, both robots gather in a finite number of rounds. 

Since we allow arbitrary initial configuration and the robots are oblivious, we can consider without loss of generality that the crash, if any, occurs at the start of the execution.

\section{Impossibility Results}

In this section we prove that the SUIR problem is not solvable in SSYNC, even if robots share a full coordinate system (the transformation function is the identity), and have access to infinite persistent memory that is readable by the other robot. In the literature~\cite{0001FPSY16}, the persistent memory aspect is called the Full-light model with an infinite number of colors. We now prove that having such capabilities does not help solving the problem. The next lemma is very simple but is a key point to prove our main impossibility result.


{
\begin{lemma}\label{lem:execution contain move to other}
Consider the SSYNC model, with rigid movements, robots endowed with full-lights with infinitely many colors, and a common coordinate system. Assuming algorithm $A$ solves the SUIR problem, then, in every execution suffix starting from a distinct configuration where only robot $r$ is activated (\emph{e.g.} because the other robot has crashed), there must exist a configuration where algorithm $A$ commands that $r$ moves to the other robot's position.
\end{lemma}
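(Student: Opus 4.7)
The plan is to use the SUIR-solving property to pinpoint the precise round at which $r$ arrives at the crashed robot, and then exploit rigidity to conclude that the algorithm must have literally commanded $r$ to go to that point.

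More concretely, let $p$ be the position of the crashed robot throughout the execution suffix, and denote the suffix by $C_0, C_1, \ldots$ with $C_0$ distinct (so $r \neq p$ at time $0$). Because $A$ solves SUIR, there exists a round $t$ with $C_t = \{p\}$; let $t^\star$ be the smallest such index. Since $C_0$ is distinct, $t^\star \geq 1$, and $C_{t^\star-1}$ is still distinct, with $r$ at some position $q \neq p$ and the other robot at $p$.

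Between rounds $t^\star-1$ and $t^\star$ only $r$ is activated, and $r$ ends up exactly at $p$. The next step is to argue that, under rigid movements, the destination commanded by $A$ at configuration $C_{t^\star-1}$ must be $p$ itself. Since the robots share a common coordinate system, the transformation function is the identity, so the global destination of $r$ equals $q + A(V_r)$, where $V_r$ is $r$'s local view; and because movements are rigid, the robot necessarily travels to this commanded destination in one round. Thus the commanded destination coincides with the arrival point, which is precisely $p$, the position of the other robot.

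There is no real obstacle here: the lemma is essentially a direct consequence of SUIR-solvability plus rigidity, and the assumptions on persistent lights and common coordinates only serve to make the statement stronger for the subsequent impossibility argument. The only point to be a bit careful about is the distinctness at round $t^\star-1$ (needed so that we can meaningfully speak of ``the other robot's position''), which follows immediately from the minimality of $t^\star$ together with $C_0$ being distinct.
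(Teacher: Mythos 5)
Your proof is correct and takes essentially the same approach as the paper: the paper's one-line argument is just the contrapositive of yours (if $r$ is never commanded to the other robot's position then, by rigidity, it never arrives there and gathering fails), while you unfold it by picking the first gathering round and reading off the commanded destination. Both rest on the same two facts — SUIR forces gathering at the crashed robot's position in a suffix where only $r$ is activated, and rigid movement makes the arrival point equal the commanded destination.
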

}
{
\begin{proof}
Any move of robot $r$ that does not go to the other robot location does not yield gathering. If this repeats infinitely, no rendezvous is achieved.
\end{proof}
}

\begin{theorem}\label{thm:SSYNC Impossibility with lights}
The SUIR problem is not solvable in SSYNC, even with rigid movements, robots endowed with full-lights with infinitely many colors, and sharing a common coordinate system.
\end{theorem}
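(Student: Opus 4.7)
The plan is a proof by contradiction: assume algorithm $A$ solves SUIR under the stated assumptions, and construct an infinite fair SSYNC execution in which the two robots never gather.

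Starting from an arbitrary distinct initial configuration $C_0$ with both lights set to the same initial color, I would apply Lemma~\ref{lem:execution contain move to other} to the $r_2$-crashed hypothetical execution (only $r_1$ ever activated) to find a first round $s$ at which $A$ commands $r_1$ to move to $r_2$'s current position $p_2$. The adversary runs the actual SSYNC schedule identically to this hypothetical execution for rounds $0, \ldots, s-1$, thereby reaching the same pre-gathering configuration $\{q, p_2\}$ with lights $(\ell, \ell_0)$. At round $s$, activating $r_1$ alone would cause gathering, so the adversary must deviate.

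The crux of the argument is to case-analyze what $A$ commands $r_2$ to do in the current view and, based on this, pick an activation that preserves distinctness. If $A$ would command $r_2$ to move to $q$, activate both robots: $r_1$ moves to $p_2$ and $r_2$ moves to $q$, producing a swap whose resulting configuration is distinct. If $A$ would command $r_2$ to move to a third point $d\neq q, p_2$, activate only $r_2$: then $r_1$ stays at $q$ and $r_2$ leaves $p_2$, again yielding a distinct configuration. If $A$ would command $r_2$ to stay in place, activate only $r_2$; this leaves the positions fixed but updates $r_2$'s light, and the resulting $r_1$-crashed-like sub-execution is ruled out from persisting indefinitely by another application of Lemma~\ref{lem:execution contain move to other} (applied with $r = r_2$), ensuring that this case is eventually replaced by one of the other two. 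After each deviation the adversary is in a fresh distinct configuration and iterates the construction, invoking the lemma again to find the next critical round for whichever robot is being activated alone.

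The resulting infinite schedule is fair because each robot is activated infinitely often across the alternating phases (swaps activate $r_1$, ``move elsewhere'' and ``stay'' phases activate $r_2$, and the roles of $r_1, r_2$ in this analysis alternate as the adversary switches which robot is being run alone), and by construction no round produces a gathered configuration, contradicting the assumption that $A$ solves SUIR. The main obstacle I anticipate is the bookkeeping required to guarantee that the iteration stays well-defined through all case switches and that fairness holds despite potentially long ``stay'' subphases, during which one robot's light is updated many times without movement; this requires an inductive invariant tracking the configuration's distinctness together with the certificate, furnished by Lemma~\ref{lem:execution contain move to other}, that each one-robot-activated sub-execution contains a critical round, so the adversary is never stuck without a valid next move.
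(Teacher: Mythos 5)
Your overall strategy is the same as the paper's: an adversarial SSYNC schedule driven by the algorithm's deterministic outputs, with Lemma~\ref{lem:execution contain move to other} used both to locate the ``critical'' rounds and to establish fairness. However, there is a genuine hole in the continuation of your third case. During the subphase where you activate only $r_2$ and $r_2$ stays put, $r_2$'s light keeps changing, hence so does $r_1$'s view; so when Lemma~\ref{lem:execution contain move to other} finally forces $r_2$ to be commanded to move, there is no guarantee that $r_1$ is still commanded to move to $p_2$. If at that round $r_2$ is commanded to move to $q$ while $r_1$ is now commanded to stay, both of the options your rule offers fail: activating both robots gathers them at $q$ (since $r_1$ does not leave $q$), and activating only $r_2$ also gathers them at $q$. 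The only safe choice there is to activate only $r_1$ (leaving positions unchanged), a move your decision procedure never produces once the critical round has been reached, since from that point on it only ever activates $r_2$ alone or both robots.

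The repair is essentially the paper's formulation: apply a per-round rule that tests $r_1$'s command \emph{first}. If $r_1$ is commanded to stay or to move to a point other than $r_2$'s position, activate only $r_1$; only if $r_1$ is commanded onto $r_2$ does one inspect $r_2$'s command, activating both robots if $r_2$ is commanded to move (yielding a swap or a move-away, both distinct) and only $r_2$ if $r_2$ is commanded to stay. Fairness then follows from two applications of Lemma~\ref{lem:execution contain move to other}, exactly as you intend: an infinite $r_1$-only suffix would mean $r_1$ is never commanded onto $r_2$, and an infinite $r_2$-only suffix would mean $r_2$ is always idle. Your phase-based bookkeeping can be rewritten along these lines, but as stated the adversary can be driven into a round where every activation it is allowed to choose gathers the robots.
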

\begin{proof}
Assume for the purpose of contradiction that such an algorithm exists. Let $r$ be one of the robots, and $r'$ be the other robot. We construct a fair infinite execution where rendezvous is never achieved. At some round $t$, we either activate only $r$, only $r'$, or both, depending on what the (deterministic) output of the algorithm in the current configuration is:
\begin{itemize}
    \item If $r$ is dictated to stay idle: activate only $r$
    \item If $r$ is dictated to move to $p\neq r'$: activate only $r$
    \item If $r$ is dictated to move to $r'$, and $r'$ is dictated to move: activate both robots.
    \item Otherwise ($r'$ is dictated to stay idle): activate only $r'$
\end{itemize}

We now show that the execution is fair. 
Suppose for the purpose of contradiction that the execution is unfair, so there exists a round $t$ after which only $r$ is executed, or only $r'$ is executed. In the first case, it implies there exists an execution suffix where $r$ is never dictated to move to the other robot, which contradicts Lemma~\ref{lem:execution contain move to other}. Now, if only $r'$ is activated after some round $t$, then there exists a suffix where $r'$ is always dictated to stay idle, which also contradicts Lemma~\ref{lem:execution contain move to other}.

The schedule we choose guarantees the following.
When only $r$ is activated, rendezvous is not achieved as $r$ is not moving to $r'$.
When only $r'$ is activated, rendezvous is not achieved as $r'$ is idle. If both robots are activated rendezvous is not achieved as $r$ is moving to $r'$ while $r'$ is moving. 

Overall, there exists an infinite fair execution where robots never meet, a contradiction with the initial assumption that the algorithm solves SUIR.
\end{proof}

\section{Reduction To One-dimensional Space}

In this section, we prove that having an algorithm solving the SUIR problem in a one-dimensional space implies the existence of an algorithm solving the same problem in a two-dimensional space. This theorem is important as our algorithms are defined on the one-dimensional space. However, since we do not prove the converse, the impossibility result we present in the previous Section works in the most general settings, assuming a two-dimensional space. Indeed, we present after the theorem, an example of a problem (the fault-free rendezvous with one common full axis) that is solvable in a two dimensional space, but that cannot be reduced to the one-dimensional space. Despite the results being intuitive, the formal proof is not trivial.

\begin{theorem}\label{thm:1D to 2D}
Suppose there exists an algorithm solving the SUIR problem where robots are restricted to a one-dimensional space. Then, there exists an algorithm solving the SUIR problem in a two-dimensional space. 
\end{theorem}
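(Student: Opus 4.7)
The plan is to build a 2D algorithm $A_2$ from the given 1D algorithm $A_1$ by projecting the dynamics onto the line joining the two robots. Explicitly, for a 2D local view $\{(0,0), v\}$ with $v\neq (0,0)$, I would define $A_2(\{(0,0), v\}) = (d/|v|)\,v$, where $d = A_1(\{0, |v|\})$ is the destination returned by $A_1$ on the 1D local view in which the robot sits at $0$ and the other robot at the positive real number $|v|$. For $v = (0,0)$, $A_2$ simply returns $(0,0)$, which is harmless since rendezvous is already achieved in that configuration.

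The first step is to establish, by induction on the round, the invariant that throughout every execution of $A_2$ the two robots remain on the line $L \subset \R^2$ determined by their initial positions (or coincide). For this, note that each robot's output in its local view is a scalar multiple of $v/|v|$, so in the global frame the actual displacement is $h_r^{-1}((d/|v|)\,v) = d\,h_r^{-1}(v)/|v| = (d/|s_r|)\,(r'-r)/|r'-r|$, where $s_r$ is the uniform scaling factor of the similarity $h_r \in \IT$. This vector lies along $L$, so $L$ is preserved at every round, including after a crash.

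The second step identifies the projected 2D execution with an execution of $A_1$ in 1D. After fixing once and for all an orientation of $L$, each robot acquires a real coordinate, and each 2D transformation $h_r$ induces a 1D transformation $h_r' \in \IT$ (a scaling by $|s_r|$, composed with a reflection iff the positive 1D direction on $L$ is reversed by $h_r$). A direct verification shows that the global displacement produced by $A_2$ on $L$ agrees, for both robots at every round, with the displacement that $A_1$ under $h_r'$ would produce in 1D. Therefore the sequence of configurations generated by $A_2$, viewed along $L$, is an execution of $A_1$ in 1D with the same activation schedule and (at most one) crash, and by the assumed correctness of $A_1$ rendezvous is reached in finite time.

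The main conceptual obstacle is making the correspondence faithful: one must check that the induced $h_r'$ are well-defined, time-invariant (because the $h_r$ are), and lie in the 1D counterpart of $\IT$ for each of the relevant settings (disoriented, common axis, and so on). Once this bookkeeping is in place, the core of the argument is the geometric invariance of $L$, which reduces the 2D problem to a purely 1D one that $A_1$ already solves; the example given after the theorem illustrates why this reduction is specific to SUIR and need not hold for tasks where the second spatial dimension carries essential information.
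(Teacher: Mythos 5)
There is a genuine gap in your construction of $A_2$: by feeding $A_1$ the unsigned view $\{0,|v|\}$, you erase the side information that a one-dimensional algorithm is entitled to, and the projected sequence of configurations is not an execution of $A_1$ in the one-dimensional model. In that model each robot has a \emph{fixed} transformation function, so whether it sees the other robot at a positive or a negative coordinate is determined by its own time-invariant orientation of the line together with the current configuration; when the robots swap sides on $L$ --- which a correct 1D SUIR algorithm may well cause, e.g.\ when both robots are dictated to move to the other's position --- the sign of the other robot's coordinate in each fixed local frame must flip. Your $A_2$ instead always presents the other robot at a positive coordinate, which would force the induced 1D transformation $h_r'$ to change over time; the ``direct verification'' claimed in your second step therefore fails precisely at the rounds following a side swap. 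The failure is not merely formal: instantiate $A_1$ with Algorithm~\ref{algo:disoriented FSYNC} and take two non-crashed robots whose levels are both congruent to $3 \bmod 4$. Under your $A_2$ both robots behave forever as ``left'' robots, each is dictated at every round to move to the other's position, and they exchange positions indefinitely without gathering, even though $A_1$ is a correct 1D algorithm.

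The missing idea, which is the crux of the paper's proof, is to project onto a \emph{signed} coordinate: each robot computes a canonical orientation $v_i$ of the line \emph{inside its own local view} (the paper takes the unit vector from the lexicographically smaller to the larger of the two points of $V_i$) and forms the 1D view $\{0,b_i\}$ with $h_i(r_j-r_i)=b_i v_i$, where $b_i$ may be negative. Because $h_i$ is fixed and the robots remain on the same line (your first step, which is correct and matches the paper), $v_i$ is invariant throughout the execution, so the induced 1D transformation $\overline{h}_i$ (defined by $\overline{h}_i(a)=b$ if and only if $h_i(av)=bv_i$, for a global orientation $v$ of $L$) is well defined and time-invariant, and the projected sequence is a genuine execution of $A_1$ to which its correctness guarantee applies. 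The rest of your argument (preservation of $L$, crashes, non-rigid stops along $L$) is sound once this is repaired.
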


\begin{proof}
Let $A_1$ be an algorithm solving the SUIR problem where robots are restricted to a one-dimensional space. We provide a constructive proof of the theorem by giving a new algorithm $A_2$ executed by robots in the two-dimensional space.

\newcommand{\vfunc}{\textbf{v}}

First, for a configuration $C = \{r_{min}, r_{max}\}$, with $r_{min} < r_{max}$ (using the lexicographical order on their coordinates), we define the function $\vfunc$ as follows:
\[
\vfunc(\{r_{min}, r_{max}\}) = \frac{r_{\max} - r_{\min}}{\lVert r_{\max} - r_{\min} \rVert}
\]

Let $r_1$ and $r_2$ denote the two robots, having transformation functions $h_1$ and $h_2$, respectively.
Let $V_i = \{(0,0), h_i(r_j - r_i) \}$, $i=1,2$, $j=3-i$, be the local view of robot $r_i$. Each robot $r_i$ can compute its own orientation vector $v_i = \vfunc(V_i)$ of the line joining the two robots. Notice that, if robots remains on the same line, then $v_i$ remains invariant during the whole execution as long as robots do not gather.

We define algorithm $A_2$, executed by robots in the two-dimensional space as follow. First, if the local view of a robot $r_i$ is $\{(0,0)\}$, then $A_2$ outputs $(0,0)$. 

Otherwise, $r_i$ can map its local view $V_i$ in a one-dimensional space to obtain $\overline{V_i} = \{0, b_i\}$ with $b_i$ such that $h_i(r_j - r_i) = b_i v_i$, and execute $A_1$ on $\overline{V_i}$. The obtained destination $\overline{p} = A_1(\overline{V_i})$, is then converted back to the two-dimensional space to obtain the destination $p = \overline{p} v_i$.
Doing so, the robots remain on the same line, and $v_i$ remains invariant while the robots are not gathered (when they are gathered, both algorithms stop).

Let $E = C_0, C_1, \ldots$ be an arbitrary execution of $A_2$. We want to construct from $E$ an execution $\overline{E}$ of $A_1$ such that the rendezvous is achieved in $\overline{E}$ if and only if the rendezvous is achieved in $E$.

Recall that we analyze each configuration $C_i$, $i\in\N$, using $Z$, the global coordinate system we use for the analysis. Let $v = \vfunc(C_0)$. Again, since robots remain on the same line, then $v = \vfunc(C_i)$ for any $C_i$ while robots are not yet gathered.

Let $O$ be any point of the line $L$ joining the two robots.
We define as follow the bijection $m$ mapping points of $L$ (in $Z$), to the global one-dimensional coordinate system $(O,v)$:
\[
\forall a\in\R, m(O + a v) = a
\]
We can extend the function $m$ to configurations as follows: \\$m(C) = \{m(r)\;|\; r\in C\}$.

Now, let $\overline{E}=\overline{C_0}, \overline{C_1},\ldots$ be the execution of $A_1$, in $(O,v)$, of two robots having transformation function $\overline{h}_1$ and $\overline{h}_2$ respectively, with $\overline{h}_i(a) = b$ if and only if $h_i(a v) = b v_i$.

\tolerance=1000
We now show that, if $C\overset{A_2}{\rightarrow} C'$ then $m(C)\overset{A_1}{\rightarrow} m(C')$. To do so we show that the result of executing $A_1$ on $m(C)$ coincides with $m(C')$. 
Let ${C=\{O+a_1v, O+a_2v\}}$, $i$ be an activated robot and $j$ be the other robot. On one hand, we have $m(C) = \{a_1, a_2\}$ and, by construction, the view $\overline{V_i}$ of robot $i$ in $m(C)$ is $\{0, b_i\}$ with $b_i = \overline{h}_i(a_j - a_i)$, so that the global destination of $r_i$ in $m(C)$ is then $a_i + \overline{h}_i^{-1}(\overline{p})$ (with $\overline{p} = A_1(\overline{V_i})$). 
On the other hand, the view ${V_i}$ of robot $i$ is $\{(0,0), h_i((a_j-a_i)v)\} = \{(0,0), b_iv_i\}$, so that the global destination of $r_i$ in $Z$ is $O+a_iv + h_i^{-1}(\overline{p}v_i) = O+a_iv + \overline{h}_i^{-1}(\overline{p})v $. Since $m( O+a_iv + \overline{h}_i^{-1}(\overline{p})v) = a_i + \overline{h}_i^{-1}(\overline{p})$, we obtain that $m(C)\overset{A_1}{\rightarrow} m(C')$ (assuming the same robots are activated in $C$ and in $m(C)$.

Hence, $\overline{C_i} = m(C_i)$ for all $i\in \N$.
Since $A_1$ solves the SUIR problem, there exists a point $p\in \R$ and a round $t$ such that, for all $t'\geq t$, $m(C_{t'}) = \{p\}$. This implies that $C_{t'} = \{ O + p v \}$, so that $A_2$ solves the SUIR problem.
\end{proof}

\paragraph{Rendezvous without faults with one full axis.}

Now, we show that the converse of Theorem~\ref{thm:1D to 2D} is not true in the fault-free model. This observation justifies that, for the results to be more general, we defined our model and gave our impossibility results for the two-dimensional space. 

We present an algorithm solving the (fault-free) rendezvous problem in a two-dimensional space, assuming robots agree on one full axis (that is, they agree on the direction and the orientation of the axis). Under this assumption, it is possible that the robots do not agree on the orientation of the line joining them, so that assuming the converse of Theorem~\ref{thm:1D to 2D} would imply the existence of an algorithm in the one-dimensional space with disoriented robots (which does not exists, using a similar proof as the one given in~\cite{SuzukiY99}).

The idea is that, if the configuration is symmetric (robots may have the same view), then robots move to the point that forms, with the two robots, an equilateral triangle. Since two such points exist, the robots choose the northernmost one (the robots agree on the $y$-axis, which provides a common North). Otherwise, the configuration is not symmetric, and there is a unique northernmost robot $r$. This robot does not move and the other robot moves to $r$.

\begin{algorithm}[H] 
\KwData{
  $r$ : robot executing the algorithm
 }
Let $\{(0,0), (x,y)\}$ be $r$'s local view.\\
\uIf{$y = 0$}{
    \nl\label{algo:fault-free rdv:move to triangle} move to the point $\left(x/2, \left|\frac{x\sqrt{3}}{2}\right|\right)$
}
\uElseIf{$y > 0$}{
    \nl\label{algo:fault-free rdv:move to other} move to the other robot, at $(x,y)$.
}
 \caption{Fault-free rendezvous. Robots agree on one full axis, may not have common unit distance, and movements are non-rigid}\label{algo:one common axis}
\end{algorithm}

\begin{theorem}\label{thm: proof of algo, one axis, non-rigid}
Algorithm~\ref{algo:one common axis} solves the (fault-free) rendezvous problem with non-rigid movements and robots having only one common full axis, and different unit distance.
\end{theorem}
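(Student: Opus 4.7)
The plan is to verify the algorithm by a case analysis on whether the two robots share the same $y$-coordinate in the global frame.

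First I would check that the set of indistinguishable transformations under the "one common full $y$-axis" assumption consists of maps of the form $h(x,y) = (\epsilon s\, x, s\, y)$ with $\epsilon \in \{-1, +1\}$ and $s > 0$; these preserve the sign of the $y$-coordinate. A direct computation then shows that when the two robots $r_1$, $r_2$ have the same global $y$-coordinate and horizontal offset $a = x(r_2) - x(r_1) \neq 0$, each activated robot (via line~\ref{algo:fault-free rdv:move to triangle}) has global destination $r_1 + (a/2, |a|\sqrt{3}/2) = r_2 + (-a/2, |a|\sqrt{3}/2)$, i.e., the same point $P$, namely the northernmost apex of the equilateral triangle on the segment $r_1 r_2$. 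This is the only place the $|\,\cdot\,|$ and the common $y$-orientation are used.

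Second I would handle the \emph{asymmetric} case where the two $y$-coordinates differ. The northern robot sees $y < 0$, which matches no branch of the algorithm, so it stays idle. The southern robot sees $y > 0$ and executes line~\ref{algo:fault-free rdv:move to other}, moving along the straight segment toward the (fixed) northern robot. Since the northern robot never moves, the southern robot is always travelling along the same segment; by the non-rigid movement rule every activation either advances it by at least $\delta$ or lets it reach the destination. I would also observe that the southern robot's global $y$-coordinate cannot equal the northern one without the two robots actually coinciding, because it moves along the segment joining them. By fairness the southern robot is activated infinitely often, so after at most $\lceil d/\delta\rceil + 1$ of its activations rendezvous is reached.

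Third I would handle the \emph{symmetric} case with distance $d = |a|$. Looking at a single round, let $\alpha_i \in \{0\} \cup [\delta/d, 1]$ be the fraction of the segment from $r_i$ to the common target $P$ actually traversed by robot $r_i$ (forced to $0$ if $r_i$ is not activated, forced to $1$ if $d \le \delta$ and $r_i$ is activated). A direct computation of the resulting positions shows that the new configuration is symmetric iff $\alpha_1 = \alpha_2$, and in that case the new inter-robot distance is $(1-\alpha_1)d$, which is either $0$ (gathering) or at most $d - \delta$. In every other case the symmetry is broken, and we fall into the asymmetric case already analysed. In particular the distance $d$ is non-increasing, and strictly decreases by at least $\delta$ each round in which we remain symmetric.

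Finally I would combine the two cases: from any symmetric configuration, after at most $\lceil d/\delta \rceil$ rounds we either gather or reach an asymmetric configuration (since by fairness at least one robot is activated each round and the adversary cannot keep symmetry indefinitely without draining the distance). From any asymmetric configuration we gather in finitely many rounds by the second step. The main obstacle is the symmetric case: carefully quantifying the joint effect of the SSYNC activation adversary and the non-rigid movement adversary, specifically showing that keeping both $\alpha_1 = \alpha_2$ and $\alpha_1 \ge \delta/d$ across consecutive rounds forces $d$ to drop below $\delta$ in finitely many rounds, after which any activation of both robots reaches $P$ and terminates.
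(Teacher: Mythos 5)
Your proposal is correct and follows essentially the same route as the paper: split on symmetric versus asymmetric configurations, show the asymmetric case terminates because the southern robot repeatedly advances by at least $\delta$ toward a fixed idle target, and show the symmetric case either gathers, breaks symmetry, or shrinks the distance by a fixed amount, which cannot recur forever. Your version is somewhat more careful than the paper's (you explicitly verify that both robots compute the same apex $P$ under the allowed transformations, and you account for SSYNC activations where only one robot moves, which the paper treats implicitly as the "different distances traveled" subcase), but the underlying argument is identical.
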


\begin{proof}
If the configuration is not symmetric, then, after executing Line~\ref{algo:fault-free rdv:move to other}, either the rendezvous is achieved, or the moving robots remains on the same line joining the robots, so the obtained configuration is still asymmetric, and the same robot is dictated to move towards the same destination, so it reaches it in a finite number of rounds.

Consider now that the initial configuration $C$ is symmetric. 
If both robots reaches their destination, the rendezvous is completed in one round. If robots are stopped before reaching their destinations, two cases can occur. Either they are stopped after traveling different distances, or they are stopped at the same $y$-coordinate. In the former case, the obtained configuration is asymmetric and we retrieve the first case. In the latter, the configuration remains symmetric, but the distance between the two robots decreases by at least $\frac{2\delta}{\sqrt{2}}$.
Hence, at each round either robots complete the rendezvous, reach an asymmetric configuration, or come closer by a fixed distance. Since the latter case cannot occurs infinitely, one of the other case occurs at least once, and the rendezvous is completed in a finite amount of rounds.
%
\end{proof}


\section{SUIR Algorithm for FSYNC Robots with a Common Coordinate System}


Since it is impossible to solve the SUIR problem in SSYNC, we now concentrate on FSYNC. We first consider a strong model, assuming robots agree on both axis, have a common unit distance, and assuming movements are rigid, before relaxing all hypotheses in Section~\ref{sec:relax}. 
By Theorem~\ref{thm:1D to 2D}, it is sufficient to give an algorithm for a one-dimensional space. 
Figure~\ref{fig:two possible configurations} illustrates the two possible configurations: either the distance between the two robots is greater than one (the common unit distance), or at most one. In the former case both robots move to the middle. In the latter, we can dictate the right robot to move to the left one, and the left robot to move one unit distance to the right of the other robot. Recall that we can distinguish the left and the right robot on the line because we assume the robots agree on both axis in the two-dimensional space.

\begin{algorithm}[H] 
\KwData{
  $r$ : robot executing the algorithm\\
  $d$ : the distance between the two robots
}
\lIf{$d > 1$}{
    move to the middle
}
\uElse{
    \uIf{$r$ is the left robot}{
        move to the point at distance one at the right of the other robot
    }
    \uElse{
        move to the other robot
    }
}
 \caption{Rendezvous with rigid movements, and a common coordinate system}\label{algo:rigid movement and common distance}
\end{algorithm}

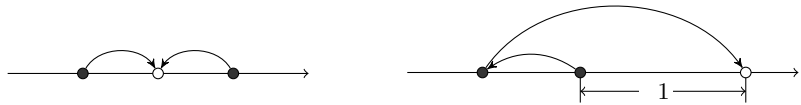
\begin{figure}[h]
    \centering
\begin{tikzpicture}
\draw[->] (-1,0) -- (3,0);
\node[robot] (a) at (0,0) {};
\node[robot] (b) at (2,0) {};
\node[dest] (m) at (1,0) {};
\node[opacity=0, text opacity=1] at (0.2,-0.6) {case $d>1$};

\path (a) edge[bend left=60,->,>=stealth'] node [left] {} (m);
\path (b) edge[bend right=60,->,>=stealth'] node [left] {} (m);
\end{tikzpicture}~~~~~~~~~~~
\begin{tikzpicture}
\draw[<->] (1.3,-0.25) -- (3.5,-0.25);
\draw[-] (1.3,0) -- (1.3,-0.4);
\draw[-] (3.5,0) -- (3.5,-0.4);

\draw[->] (-1,0) -- (4.2,0);
\node[robot] (a) at (0,0) {};
\node[robot] (b) at (1.3,0) {};
\node[dest] (r) at (3.5,0) {};
\node[opacity=0, text opacity=1] at (0.2,-0.6) {case $d\leq 1$};

\node[opacity=1, fill=white, inner sep = 1pt, text opacity=1] at (2.3,-0.25) {$1$};

\path (a) edge[bend left=55,->,>=stealth'] node [left] {} (r);
\path (b) edge[bend right=35,->,>=stealth'] node [left] {} (a);
\end{tikzpicture}
    \caption{The two possible configurations, depending the distance between the two robots}
    \label{fig:two possible configurations}
\end{figure}

\begin{theorem}
Algorithm~\ref{algo:rigid movement and common distance} solves the SUIR problem in FSYNC with rigid movement, and robots agreeing on both axes and unit distance.
\end{theorem}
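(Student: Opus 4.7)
By Theorem~\ref{thm:1D to 2D}, it suffices to analyze Algorithm~\ref{algo:rigid movement and common distance} on the one-dimensional configurations obtained from the reduction; I therefore reason throughout on a line and identify each robot with its coordinate, using the common orientation to distinguish the \emph{left} from the \emph{right} robot. I would begin with the fault-free case. Let $d$ denote the current distance. If $d>1$, both robots are activated, both are dictated to move to the midpoint, and since movements are rigid both reach it, so rendezvous is completed in one round. If $d\le 1$, let $x_L<x_R$ be their positions, with $x_R-x_L=d$. The left robot moves to $x_R+1$ while the right robot moves to $x_L$. The resulting positions are $\{x_L,x_R+1\}$, hence the new distance is $d+1>1$, and in the following round both robots move to the midpoint and meet there. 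Thus in the fault-free case rendezvous occurs in at most two rounds.

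Next I would handle the crashed case. By the remark in the model section, one can assume the crash occurs at round zero, so one robot $r$ is forever immobile at some fixed position $p$ while the correct robot $r'$ executes the algorithm alone. Let $d_t$ denote the distance at round $t$. While $d_t>1$, only $r'$ is activated, so $r'$ moves to the midpoint of $\{r',r\}$, yielding $d_{t+1}=d_t/2$. After at most $\lceil\log_2 d_0\rceil$ rounds, we obtain $d_t\le 1$. At that point two subcases arise, determined by whether the crashed robot lies on the left or on the right of $r'$.

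If $r$ is on the left of $r'$, then $r'$ is the right robot and Algorithm~\ref{algo:rigid movement and common distance} dictates that it move to the other robot's position; $r'$ reaches $p$ in one round and rendezvous is achieved. If instead $r$ is on the right of $r'$, then $r'$ is the left robot and moves to $p+1$, which lies at distance exactly $1$ to the right of $r$. In the following round $d=1\le 1$ still holds, but now $r'$ is the right robot, so it moves to $r$'s position, achieving rendezvous. In either subcase at most two additional rounds are needed once $d_t\le 1$.

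The proof is essentially a case analysis and the step that deserves the most care is the second subcase of the crashed execution: when the correct robot is the left robot and $d\le 1$, it appears to overshoot the crashed robot by moving to $p+1$ rather than directly to $p$. I would explicitly point out that this is precisely what flips the left/right labels in the next round, letting the (now right) correct robot apply the ``move to the other robot'' branch and terminate. Combining all cases, every execution of Algorithm~\ref{algo:rigid movement and common distance} — with or without a crash — reaches a configuration $\{p\}$ in a finite number of rounds and remains there, which is exactly the specification of SUIR.
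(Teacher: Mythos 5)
Your proof is correct and follows essentially the same route as the paper's: one round (or two, via the overshoot to distance $1+d>1$) in the fault-free case, and halving until $d\le 1$ followed by the left/right subcases (including the deliberate overshoot to $p+1$ that flips the correct robot to the right side) in the crashed case. The only difference is that you spell out the round count $\lceil\log_2 d_0\rceil$ and the label-flipping mechanism more explicitly than the paper does, which is harmless.
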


\begin{proof}
First we consider the case where a robot crashes.
If the left robot crashes, the right robot halves its distance with the other robot each round until its distance is at most one. Then, the right robot move to the other robot and the rendezvous is achieved.

If the right robot crashes, the left robot halves its distance with the other robot each round until its distance is at most one unit. Then, the left robot moves to the right of the other robot, at distance one. It then move to the other robot and the rendezvous is achieved.

Now assume no robot crashes.
If the configuration is such that the distance $d$ between the two robots is greater than one, then both robots move to the middle at the same time and the rendezvous is achieved.
Otherwise, if the distance $d$ is at most one, after one round the robots are at distance $1+d > 1$, so that after one more round, the rendezvous is achieved.
\end{proof}


\section{SUIR Algorithm for Disoriented Robots in FSYNC}
\label{sec:relax}

In this section we present Algorithm~\ref{algo:disoriented FSYNC}, which works with disoriented robots (robots do not agree on any axis, nor on the unit distance), and non-rigid movements.
The algorithm is defined on the line. Each robot sees the line oriented in some way, but robots might not agree on the orientation of the line. However, since the orientation of the line is deduced from the robot own coordinate system, it does not change over time.

\begin{algorithm}[H]
  Let $d$ be the distance to the other robot 

Let $i\in\mathbb{Z}$ such that $d\in [2^{-i}, 2^{1-i})$\;
\lIf{$i \equiv 0 \mod 2$}{
    move to the middle
}
\lIf{$i \equiv 1 \mod 4$}{
    \emph{left} $\rightarrow$ move to middle ;
    \emph{right} $\rightarrow$ move to other
}
\lIf{$i \equiv 3 \mod 4$}{
    \emph{left} $\rightarrow$ move to other ;
    \emph{right} $\rightarrow$ move to middle
}
 \caption{SUIR Algorithm for disoriented robots}\label{algo:disoriented FSYNC}
\end{algorithm}

\begin{figure}[h]
    \centering
\begin{tikzpicture}
\draw[->] (-1,0) -- (3,0);
\node[robot] (a) at (0,0) {};
\node[robot] (b) at (2,0) {};
\node[dest] (m) at (1,0) {};
\node[opacity=0, text opacity=1] at (-4,0) {case $i\equiv 0\mod 4$};

\path (a) edge[bend left=60,->,>=stealth'] node [left] {} (m);
\path (b) edge[bend right=60,->,>=stealth'] node [left] {} (m);
\end{tikzpicture}

\begin{tikzpicture}
\draw[-, white] (-5,-0.2) -- (2.5,0.5); 
\draw[->] (-1,0) -- (3,0);
\node[robot] (a) at (0,0) {};
\node[robot] (b) at (2,0) {};
\node[dest] (m) at (1,0) {};
\node[opacity=0, text opacity=1] at (-4,0) {case $i\equiv 1\mod 4$};

\path (a) edge[bend left=30,->,>=stealth'] node [left] {} (m);
\path (b) edge[bend right=50,->,>=stealth'] node [left] {} (a);
\end{tikzpicture}

\begin{tikzpicture}
\draw[-, white] (-5,-0.2) -- (2.7,0.6); 
\draw[->] (-1,0) -- (3,0);
\node[robot] (a) at (0,0) {};
\node[robot] (b) at (2,0) {};
\node[dest] (m) at (1,0) {};
\node[opacity=0, text opacity=1] at (-4,0) {case $i\equiv 2\mod 4$};

\path (a) edge[bend left=60,->,>=stealth'] node [left] {} (m);
\path (b) edge[bend right=60,->,>=stealth'] node [left] {} (m);
\end{tikzpicture}

\begin{tikzpicture}
\draw[-, white] (-5,-0.2) -- (3,0.5); 
\draw[->] (-1,0) -- (3,0);
\node[robot] (a) at (0,0) {};
\node[robot] (b) at (2,0) {};
\node[dest] (m) at (1,0) {};
\node[opacity=0, text opacity=1] at (-4,0) {case $i\equiv 3\mod 4$};

\path (a) edge[bend left=50,->,>=stealth'] node [left] {} (b);
\path (b) edge[bend right=30,->,>=stealth'] node [left] {} (m);
\end{tikzpicture}
    \caption{The four possible configurations, depending the distance between the two robots. We have split the case $i\equiv 0 \mod 2$ into two lines to help the reader.}
    \label{fig:four possible configurations}
\end{figure}
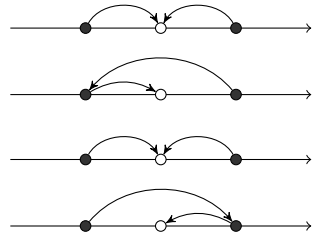

The different moves of a robot $r$ depend on whether a $r$ sees itself on the left or the right of the other robot, and on its \emph{level}. The level of robot at distance $d$ from the other robot (according its own coordinate system, hence its own unit distance), is the integer $i\in\Z$ such that $d\in [2^{-i}, 2^{1-i})$.
Figure~\ref{fig:four possible configurations} summarizes the eight possible views of a robot $r$, and the corresponding movements. Each line represents the congruence of the level of the robot modulo four, and on each line, we see the movement of the robot whether it sees itself on the right or on the left of the other robot. A given figure does not necessarily imply that both robots will actually perform the corresponding movement at the same time (since they may \emph{not} have the same view).

For instance, if a robot $r_1$ has a level $i_1$ congruent to 1 modulo 4 and sees itself on the right, while the other robot $r_2$ has a level $i_2$ congruent to 2 modulo 4, and also sees itself on the right, then $r_1$ moves to the other robot position, and $r_2$ moves to the middle. Assuming both robots reach their destinations, then the distance between them is divided by two (regardless of the coordinate system) so their levels increase by one, and they both see the other robot on the other side, so each robot now sees the other robot on its left.

Let $C$ be any configuration and $d$ is the distance (in the global coordinate system $Z$) between the two robots. Let $x$, resp. $y$, be the distance, in $Z$, traveled by the left robot, resp. the right robot. Since the robots move toward each other, after executing one round, the distance between the robot becomes $f(d,x,y) = |d - x - y|$.

\begin{lemma}\label{lem:f decreases by fixed distance}
If at least one robot is dictated to move to the middle, then we have $f(d,x,y) \leq d - \min(\delta, d/2)$.
\end{lemma}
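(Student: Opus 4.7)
The plan is to case-split on which robot is dictated to the middle and apply the non-rigid movement guarantee to bound the travel distances $x$ and $y$. By symmetry I may assume without loss of generality that the left robot is (one of) the robot(s) dictated to move to the middle. Its intended travel distance is $d/2$, so by the non-rigid movement model its actual travel distance $x$ satisfies $x \geq \min(\delta, d/2)$: if $d/2 \leq \delta$ the robot reaches its destination exactly, and otherwise the adversary can only halt it after at least $\delta$ has been covered.

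The right robot is dictated either to move to the middle (intended distance $d/2$) or to move to the other robot's position (intended distance $d$); in both cases it moves toward the left robot, so $0 \leq y \leq d$. To conclude, I bound $x + y$ from both sides and then apply the definition $f(d,x,y) = |d - x - y|$. The lower bound is immediate: $x + y \geq x \geq \min(\delta, d/2)$. The upper bound uses $x \leq d/2$ and $y \leq d$ to get $x + y \leq 3d/2 \leq 2d - \min(\delta, d/2)$, where the last inequality is exactly $\min(\delta, d/2) \leq d/2$. Combining, $\min(\delta, d/2) \leq x + y \leq 2d - \min(\delta, d/2)$, which rearranges to $|d - x - y| \leq d - \min(\delta, d/2)$, as required.

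The main subtlety is handling the regime $x + y > d$, which occurs when the right robot passes the middle while heading to the other robot's original position; there the new distance equals $x + y - d$ rather than $d - x - y$, so the argument genuinely needs both a lower and an upper bound on $x + y$ rather than just the lower one. Checking that this overshoot is absorbed by the slack $d - \min(\delta, d/2)$ is the only non-routine step.
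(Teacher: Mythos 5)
Your argument is essentially the paper's: both proofs reduce the claim to bounding $x+y$ within the interval $[\min(\delta,d/2),\, 3d/2]$ and then observing that $|d-w| \leq d-\min(\delta,d/2)$ on that interval (the paper phrases the last step via convexity of $w \mapsto |d-w|$, you phrase it as a two-sided inequality, but these are the same computation). Your explicit discussion of the overshoot regime $x+y>d$ is exactly what the upper endpoint $3d/2$ of the interval is there to absorb.

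There is, however, one genuine gap: your lower bound $x+y \geq \min(\delta,d/2)$ is derived solely from the robot that is dictated to move to the middle, via the claim $x \geq \min(\delta,d/2)$. This fails if that robot is the crashed one, in which case $x=0$. The lemma must cover executions with one crashed robot --- it feeds into Lemma~\ref{lem:distance decreases by fixed distance}, which is invoked in the final theorem precisely to handle the crash case as well --- and ``dictated to move to the middle'' refers to the algorithm's output for that robot's view, not to an actual movement. The paper's proof sidesteps this by noting that a traveled distance is either $0$ (crashed robot) or at least $\min(\delta,d/2)$, and that $x=y=0$ is impossible since at most one robot crashes and the other is always activated in FSYNC with a nonzero intended displacement. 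The fix to your proof is one line: if the middle-bound robot is crashed, the other (correct) robot still travels at least $\min(\delta, d/2)$ toward it with $y \leq d$, so the same interval bound on $x+y$ holds. Without that case, the proof as written is incomplete.
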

\begin{proof}
For any fixed $d$, using the symmetry of $f$ (with respect to the second and third argument), we have $f(d,x,y) = g_d(x+y)$ with $g_d: w\mapsto |d - w|$.
We know that the distance traveled by the robots is either $0$ (if one robot crashes), or at least $\min(d/2, \delta)$, but we cannot have $x=y=0$. Also, since at least one robot moves to the middle, we have either \emph{(i)} $x \leq d/2$ and $y\leq d$, or \emph{(ii)} $x \leq d$ and $y\leq d/2$. Hence, the sum $x+y$ is in the interval $[\min(d/2, \delta), 3d/2]$. 

As a convex function, the maximum of $g_d$ is reached at the boundary of its domain
\begin{align*}
    f(d,x,y) = g_d(x+y) &\leq \max(g_d(3d/2), g_d(\min(d/2, \delta)))\\
&\leq \max(d/2, d - \min(d/2, \delta)) =  d - \min(d/2, \delta)
\end{align*}
\end{proof}

The next lemma is a direct consequence of Lemma~\ref{lem:f decreases by fixed distance}.
{
\begin{lemma}\label{lem:distance decreases by fixed distance}
From a configuration where robots are at distance $d$ (in $Z$), then, after two rounds, the distance between the robots decreases by at least $\min(\delta, d/2)$.
\end{lemma}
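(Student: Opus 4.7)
The plan is to show that across any two consecutive rounds at least one round contains a robot that is dictated to move to the middle, and then invoke Lemma~\ref{lem:f decreases by fixed distance}. The subtlety is that Lemma~\ref{lem:f decreases by fixed distance} bounds the decrease in terms of the distance at the moment of the middle-move, which may be strictly smaller than the initial distance $d$, so the argument must also track the decrease already produced by the first round.

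First I would dispose of the easy cases. If one robot has crashed, Algorithm~\ref{algo:disoriented FSYNC} always dictates the surviving robot to move, so its traveled distance is at least $\min(\delta,d/2)$ and the bound is already met after round~1. In the fault-free case, if at least one robot is dictated to move to the middle in round~1, then Lemma~\ref{lem:f decreases by fixed distance} bounds $d_1$ directly and monotonicity of the inter-robot distance yields the conclusion. So the only remaining situation is that both robots are dictated to move to the other in round~1, which forces both levels to be odd and each robot to sit on the ``wrong'' side for its class modulo $4$.

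In that situation I would split on the outcome of round~1 using $x,y\in[\min(\delta,d),d]$ for the traveled distances. If the robots do not cross ($x+y\leq d$), then $d_1=d-x-y\leq d-2\min(\delta,d)$, which already exceeds the required decrease. If they cross and the new distance $d_1$ has dropped below $d/2$, then round~1 alone has decreased the distance by more than $d/2\geq\min(\delta,d/2)$ and we are done.

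The main obstacle is the last case: the robots crossed and $d_1\geq d/2$. Here crossing flips each robot's perceived side, while the ratio $d_1/d\geq 1/2$ forces each perceived distance to stay in the same dyadic interval or move to the next smaller one, so each level increases by $0$ or $1$. A short case check on $(i_i \bmod 4,\,\text{level shift})$ shows that both robots end round~2 either with an even new level or with their new side matching their new odd class; in both cases the algorithm dictates a move to the middle. Applying Lemma~\ref{lem:f decreases by fixed distance} to round~2 yields $d_2\leq d_1-\min(\delta,d_1/2)$, and a brief check splitting on $\delta\leq d/2$ versus $\delta>d/2$, together with $d/2\leq d_1\leq d$, confirms $(d-d_1)+\min(\delta,d_1/2)\geq\min(\delta,d/2)$, giving the desired two-round bound.
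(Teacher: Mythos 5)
Your proof is correct and follows essentially the same route as the paper's: show that if neither round~1 robot is sent to the middle, then both were sent to the other, and after a crossing the flipped sides together with a level shift of $0$ or $1$ force both robots to move to the middle in round~2, at which point Lemma~\ref{lem:f decreases by fixed distance} applies. Your final arithmetic step, verifying $(d-d_1)+\min(\delta,d_1/2)\geq\min(\delta,d/2)$ using $d/2\leq d_1\leq d$, is in fact slightly more careful than the paper, which asserts the two-round bound without noting that the second application of the lemma only yields a decrease of $\min(\delta,d_1/2)$.
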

}{
\begin{proof}
From Lemma \ref{lem:f decreases by fixed distance}, we know that if at least one robot is dictated to move to the middle, then, after one round, the distance between the two robots decreases by at least $\min(\delta, d/2)$. 
Otherwise,  we know that both robots were dictated to move to the other robot location. Hence, the distance between the robots is either at most $d - \min(\delta, d/2)$, or is greater than $d - \min(\delta, d/2)$. In the former case, the lemma is proved. In the latter case the order of the robots changes (a left robot becomes a right robot, and vice and versa). This happens regardless of their coordinates system (maybe both robots view themselves on the right, then they both view themselves at the left). Also, the level of each robot is either the same, or is increased by one. In all cases, both robots are dictated to move to the middle.
In more details, a right robot with level $i\equiv 1\mod 4$ becomes a left robot with either the same level of level $i+1\equiv 2\mod 4$. In both cases, in the next round, the robot is dictated to move to the middle.
A left robot with level $i\equiv 3\mod 4$ becomes a right robot with either the same level of level $i+1\equiv 0\mod 4$. In both cases, in the next round, the robot is dictated to move to the middle.

So that after one more round, the distance decreases by at least $\min(\delta, d/2)$.
\end{proof}
}

\begin{lemma}\label{lem: proof of algo, rigid, crash}
Assuming rigid movement and one robot crash, Algorithm~\ref{algo:disoriented FSYNC} solves the SUIR problem.
\end{lemma}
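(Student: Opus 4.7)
The plan is to exploit the fact that, under a crash, only the correct robot $r$ moves, so the evolution reduces to a one-body problem governed entirely by $r$'s local view. Let $s\in\{L,R\}$ denote the side on which $r$ sees the crashed robot in $r$'s own (fixed) coordinate system. I would first show that $s$ is invariant throughout the execution: the only two destinations the algorithm ever outputs for $r$ are ``middle'' and ``other robot'', both of which lie on the segment joining $r$ to the crashed robot, so under rigid movement $r$ never crosses past the crashed position, and hence the crashed robot remains on the same side of $r$'s ego-centric line.

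Next I would use rigidity to pin down the dynamics of $r$'s level. Whenever $r$ is commanded to move to the middle, its distance $d$ to the crashed robot is exactly halved; so if $d\in[2^{-i},2^{1-i})$ before the round, then $d/2\in[2^{-(i+1)},2^{-i})$ after it, i.e.\ the level increments by exactly one. So as long as $r$ is told to move to the middle, its level runs through consecutive integers $i_0,i_0+1,i_0+2,\ldots$ at one step per round.

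Then I would read off Algorithm~\ref{algo:disoriented FSYNC} restricted to side $s$. If $s=R$, the only residue class of the level that triggers ``move to other'' is $i\equiv 1\pmod 4$; all other residues produce ``move to middle''. If $s=L$, the unique gathering residue is $i\equiv 3\pmod 4$. Combined with the invariance of $s$ and the fact that the level marches forward by one per round, starting from any initial level $i_0$ the sequence $i_0, i_0+1, i_0+2, \ldots$ hits the gathering residue within at most four rounds. At that round the algorithm sends $r$ to the crashed robot's exact location, and by rigidity the rendezvous is completed.

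The main subtle point is the invariance of the side $s$: one must notice that both possible targets lie strictly between $r$ and the crashed robot (or coincide with it), so $r$ cannot overshoot and the orientation of the line in $r$'s own frame cannot flip. Once this observation is in place, the rest of the argument is a short finite case analysis over the residues of $i$ modulo $4$, and the bound of at most four rounds to rendezvous follows immediately. The trivial case $C_0=\{p\}$ requires no moves and is handled by the algorithm outputting no destination in that view.
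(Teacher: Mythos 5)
Your proposal is correct and follows essentially the same route as the paper's proof: the side on which $r$ sees the crashed robot is invariant, each ``move to middle'' under rigidity increments the level by exactly one, and the level reaches the unique gathering residue ($1 \bmod 4$ on the right, $3 \bmod 4$ on the left) within at most four rounds. You merely make explicit two points the paper leaves implicit (the no-overshoot argument for side invariance and the exact level increment), which is a welcome addition but not a different approach.
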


\begin{proof}
Let $i$ be the level of the correct robot $r$. Assume the other robot crashes. Robot $r$ either sees itself on the right or on the left of the other robot.

If $r$ sees itself on the right, then depending on its level, either $r$ moves to the middle, or move to the other robot. In the former case, the level of $r$ increases by one and $r$ continues to see itself on the right. In the latter case, the rendezvous is achieved in one round. After at most three rounds, the level of $r$ is congruent to 1 modulo 4 so that after at most four rounds the rendezvous is achieved.

Similarly, if $r$ sees itself on the left, then after at most four rounds, $r$ level is congruent to 3 modulo 4 and the rendezvous is achieved.
\end{proof}

If, at round $t$, one robot sees itself on the right, and the other sees itself on the left, then they agree on the orientation of the line at time $t$. Since, for each robot, the orientation of the line does not change, then they agree on it during the whole execution (except when they are gathered, as the line is not defined in that case).

Similarly, if at some round, both robots see themselves at the right (resp. at the left), then their orientations of the line are opposite, and remain opposite during the whole execution (again, until they gather). Hence we have the following remark.

\begin{remark}\label{rem:orientation common or opposite}
Consider two disoriented robots moving on the line $L$ joining them and executing Algorithm~\ref{algo:disoriented FSYNC}. Then, either they have a common orientation of $L$ during the whole execution (while they are not gathered), or they have opposite orientations of $L$ during the whole execution (while they are not gathered).
\end{remark}

\begin{lemma}\label{lem: proof of algo, correct, common}
Assuming rigid movements, no crash, and robots having \textbf{a common orientation} of the line joining them, then, Algorithm~\ref{algo:disoriented FSYNC} solves the SUIR problem.
\end{lemma}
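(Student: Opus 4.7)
Since the two robots share a common orientation of the line joining them (the hypothesis of the lemma), I can unambiguously label them as the \emph{left} and \emph{right} robot in every configuration before rendezvous, and this labeling is only updated when the two robots physically cross each other on the line. Let $L$ and $R$ denote the current levels of the left and right robots (each measured with its own unit distance). The algorithm's output depends only on $L \bmod 4$, $R \bmod 4$, and the side each robot sees itself on, so the joint move of the two robots falls into exactly one of four patterns, which I denote MM (both to middle), MO (left to middle, right to other), OM (left to other, right to middle), and OO (both to other).

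Reading off Algorithm~\ref{algo:disoriented FSYNC}, MM occurs iff $L \not\equiv 3 \pmod 4$ and $R \not\equiv 1 \pmod 4$; MO iff $L \not\equiv 3$ and $R \equiv 1$; OM iff $L \equiv 3$ and $R \not\equiv 1$; OO iff $L \equiv 3$ and $R \equiv 1$. Using rigidity, I would then verify the four transitions by an explicit position computation: MM brings both robots to the midpoint and achieves rendezvous; in MO and OM the distance is halved and the two robots cross, so the left/right labels swap and both levels increase by exactly one (since $D/2 \in [u\,2^{-L-1}, u\,2^{-L})$ for each robot of unit distance $u$), yielding the transition $(L,R) \mapsto (R+1, L+1) \bmod 4$; in OO the distance is preserved but the robots swap, yielding $(L,R) \mapsto (R, L)$.

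The remainder is a short finite case analysis on $(L \bmod 4, R \bmod 4)$. The OO transition sends $(3,1)$ to $(1,3)$, which lies in the MM region. The MO transition sends $(L, 1)$ to $(2, L+1)$; this is MM when $L \in \{1, 2\}$, and when $L = 0$ it yields the MO state $(2, 1)$, which in turn transitions to $(2, 3)$, which is MM. The OM case is symmetric (OM with $R \in \{0,3\}$ reaches MM in one step, while $(3,2) \mapsto (3,0) \mapsto (1,0)$ reaches MM in two). Hence from every initial configuration the MM pattern is triggered within at most three rounds, at which point rigidity ensures both robots reach the midpoint simultaneously and gather. The main obstacle I expect is keeping the bookkeeping straight: one must carefully account for the left/right swap after each non-MM move and the corresponding increment of both levels; once this is set up, the transitions on the sixteen residue classes of $(L \bmod 4, R \bmod 4)$ form a small reachability graph that drains into MM within three rounds.
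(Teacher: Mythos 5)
Your proposal is correct and follows essentially the same route as the paper's proof: a finite case analysis on the residues of the two levels modulo $4$, tracking the left/right swap and the level increment after each non--``both-to-middle'' round, and showing every residue pair drains into the gathering case within at most three rounds. Your grouping of the sixteen cases into the four move patterns MM, MO, OM, OO is just a more compact packaging of the paper's explicit enumeration, and your transition rules $(L,R)\mapsto(R+1,L+1)$ and $(L,R)\mapsto(R,L)$ match the paper's exactly.
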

\begin{proof}
Since the robots have a common orientation, we know there is one robot that sees itself on the right and and one robot that sees itself on the left. Of course the robots are not aware of this, but we saw in the previous remark that a common orientation is preserved during the whole execution (while robots are not gathered).

Let $(i,j) \in \Z^2$ denote a configuration where the robot on the left is at level $i$, and the robot on the right is at level $j$, and we write $(i,j)\equiv (k,l) \mod 4$ if and only if $i\equiv k \mod 4$ and $j\equiv l \mod 4$.

To prove the lemma we want to show that for any configuration $(i,j) \in \Z^2$, the robots achieve rendezvous. 
Take an arbitrary configuration $(i,j) \in \Z^2$. We consider all 16 cases:

\begin{enumerate}
\item \label{i:com:0-0} \textbf{if} $(i,j)\equiv (0,0) \mod 4$: rendezvous is achieved in one round.
\item \label{i:com:0-1} \textbf{if} $(i,j)\equiv (0,1) \mod 4$:  we reach configuration $(j+1, i+1) \equiv (2,1) \mod 4$
\item \label{i:com:0-2} \textbf{if} $(i,j)\equiv (0,2) \mod 4$: rendezvous is achieved in one round.
\item \label{i:com:0-3} \textbf{if} $(i,j)\equiv (0,3) \mod 4$: rendezvous is achieved in one round.
\item \label{i:com:1-0} \textbf{if} $(i,j)\equiv (1,0) \mod 4$: rendezvous is achieved in one round.
\item \label{i:com:1-1} \textbf{if} $(i,j)\equiv (1,1) \mod 4$:  we reach configuration $(j+1, i+1) \equiv (2,2) \mod 4$
\item \label{i:com:1-2} \textbf{if} $(i,j)\equiv (1,2) \mod 4$: rendezvous is achieved in one round.
\item \label{i:com:1-3} \textbf{if} $(i,j)\equiv (1,3) \mod 4$: rendezvous is achieved in one round.
\item \label{i:com:2-0} \textbf{if} $(i,j)\equiv (2,0) \mod 4$: rendezvous is achieved in one round.
\item \label{i:com:2-1} \textbf{if} $(i,j)\equiv (2,1) \mod 4$:  we reach configuration $(j+1, i+1) \equiv (2,3) \mod 4$
\item \label{i:com:2-2} \textbf{if} $(i,j)\equiv (2,2) \mod 4$: rendezvous is achieved in one round.
\item \label{i:com:2-3} \textbf{if} $(i,j)\equiv (2,3) \mod 4$: rendezvous is achieved in one round.
\item \label{i:com:3-0} \textbf{if} $(i,j)\equiv (3,0) \mod 4$:  we reach configuration $(j+1, i+1) \equiv (1,0) \mod 4$
\item \label{i:com:3-1} \textbf{if} $(i,j)\equiv (3,1) \mod 4$:  we reach configuration $(j, i) \equiv (1,3) \mod 4$
\item \label{i:com:3-2} \textbf{if} $(i,j)\equiv (3,2) \mod 4$:  we reach configuration $(j+1, i+1) \equiv (3,0) \mod 4$
\item \label{i:com:3-3} \textbf{if} $(i,j)\equiv (3,3) \mod 4$:  we reach configuration $(j+1, i+1) \equiv (0,0) \mod 4$
\end{enumerate}

In any case, rendezvous is achieved after at most three rounds.
\end{proof}

\begin{lemma}\label{lem: proof of algo, correct, opposite}
Assuming rigid movement, no crash, and robots having \textbf{opposite orientations} of the line joining them, then, Algorithm~\ref{algo:disoriented FSYNC} solves the SUIR problem.
\end{lemma}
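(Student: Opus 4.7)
My plan is to proceed by case analysis on the congruences of the two levels modulo 4, mirroring the proof of Lemma~\ref{lem: proof of algo, correct, common}. The first normalization step is Remark~\ref{rem:orientation common or opposite}: opposite orientations of the line $L$ force both robots to see themselves on the same side. Without loss of generality both see themselves on the right (the both-left case is symmetric, since the algorithm's branches swap roles of ``right'' and ``left'' when the level changes parity mod 4 in the expected way). In this shared view the algorithm collapses to the rule: a robot moves to the other's position iff its level is $\equiv 1 \mod 4$, and otherwise moves to the middle. Let $(\ell_1,\ell_2)\in\Z^2$ denote the pair of levels.

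I would then enumerate the 16 residues of $(\ell_1 \bmod 4,\ell_2 \bmod 4)$. In the nine residues where neither $\ell_r$ is $\equiv 1 \mod 4$, both robots simultaneously move to the midpoint; rigidity delivers rendezvous in one round. The six residues with exactly one level $\equiv 1 \mod 4$ trigger the asymmetric ``one to middle, one to other'' pattern: the global distance halves (so both levels increase by exactly one, since $d\in[2^{-\ell},2^{1-\ell})$ implies $d/2\in[2^{-(\ell+1)},2^{-\ell})$), and a brief geometric check using $v_1=-v_2$ shows that both robots now see themselves on the left. Under left-seeing, the algorithm's move-to-other branch fires iff the level is $\equiv 3 \mod 4$; a subcase split on the companion residue then shows rendezvous is reached in at most two further rounds. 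The remaining residue $(1,1)$ yields a swap (both moving to the other's position), leaving the levels unchanged but again flipping both robots to left-seeing; in the next round both fire move-to-middle and rendezvous completes.

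The main obstacle I anticipate is the geometric flip just described: proving that any non-rendezvous round under opposite orientations flips both robots' self-perception from right-seeing to left-seeing (or vice versa), while halving the distance in the asymmetric case and preserving it in the swap case. The argument hinges on $v_1=-v_2$, which is precisely what Remark~\ref{rem:orientation common or opposite} guarantees; once the flip is established the rest of the proof is a compact finite table of residues, identical in spirit to the 16-case enumeration already carried out in Lemma~\ref{lem: proof of algo, correct, common}. I would therefore isolate the flip as a small geometric claim, and then summarize the residue analysis as a table giving the target residue (and side) one round later, yielding rendezvous in at most three rounds from any initial configuration.
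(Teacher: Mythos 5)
Your proposal is correct and follows essentially the same route as the paper: a finite case analysis on the level residues modulo $4$, tracking how each round flips both robots' perceived side and increments the levels, until a round is reached where both robots move to the midpoint (the paper simply enumerates all $20$ unordered both-left/both-right cases rather than invoking your symmetry). The only point to tighten is the ``both-left is symmetric'' step: the precise statement is that a left-seeing robot at level $\ell$ executes exactly the same rule as a right-seeing robot at level $\ell+2$, and this shift commutes with the round dynamics (side flip, level increment), so the both-left residue table is the both-right one shifted by $2$.
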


\begin{proof}

Since the robots have opposite orientations, we know they either both see themselves on the right or they both both see themselves on the left. Of course the robots are not aware of this, but we saw in the previous remark that the opposite orientations are preserved during the whole execution (while robots are not gathered).

In this proof, $R\{i,j\}$ denotes a configuration where both robots see themselves on the right and one of them has level $i$ and the other as level $j$. Here, the order between $i$ and $j$ does not matter (hence the set notation). Similarly $L\{i,j\}$ denotes a configuration where both robots see themselves on the left, and one of them has level $i$ and the other as level $j$. 

Here, assuming without loss of generality that $i\leq j \mod 4$, we write $R\{i,j\}\equiv (k,l) \mod 4$, resp. $L\{i,j\}\equiv (k,l) \mod 4$, if and only if, $i\equiv k \mod 4$ and $j\equiv l \mod 4$.

To prove the lemma, we want to show that for any configuration $R\{i,j\}$ or $L\{i,j\}$, the robots achieve rendezvous.
Take an arbitrary configuration $(i,j) \in \Z^2$. We consider all 20 cases:

\begin{enumerate}
\item \label{i:opp:L0-0} \textbf{if} $L\{i,j\}\equiv (0,0) \mod 4$: rendezvous is achieved in one round.
\item \label{i:opp:L0-1} \textbf{if} $L\{i,j\}\equiv (0,1) \mod 4$: rendezvous is achieved in one round.
\item \label{i:opp:L0-2} \textbf{if} $L\{i,j\}\equiv (0,2) \mod 4$: rendezvous is achieved in one round.
\item \label{i:opp:L0-3} \textbf{if} $L\{i,j\}\equiv (0,3) \mod 4$: we reach configuration $R\{i{+}1, j{+}1\}{\equiv} (0,1) \mod 4$.

\item \label{i:opp:L1-1} \textbf{if} $L\{i,j\}\equiv (1,1) \mod 4$: rendezvous is achieved in one round.
\item \label{i:opp:L1-2} \textbf{if} $L\{i,j\}\equiv (1,2) \mod 4$: rendezvous is achieved in one round.
\item \label{i:opp:L1-3} \textbf{if} $L\{i,j\}\equiv (1,3) \mod 4$: we reach configuration $R\{i{+}1, j{+}1\}{\equiv} (0,2) \mod 4$.
\item \label{i:opp:L2-2} \textbf{if} $L\{i,j\}\equiv (2,2) \mod 4$: rendezvous is achieved in one round.
\item \label{i:opp:L2-3} \textbf{if} $L\{i,j\}\equiv (2,3) \mod 4$: we reach configuration $R\{i{+}1, j{+}1\}{\equiv} (0,3) \mod 4$.
\item \label{i:opp:L3-3} \textbf{if} $L\{i,j\}\equiv (3,3) \mod 4$: we reach configuration $R\{i, j\}{\equiv} (3,3) \mod 4$.

\item \label{i:opp:R0-0} \textbf{if} $R\{i,j\}\equiv (0,0) \mod 4$: rendezvous is achieved in one round.
\item \label{i:opp:R0-1} \textbf{if} $R\{i,j\}\equiv (0,1) \mod 4$: we reach configuration $L\{i{+}1, j{+}1\}{\equiv} (1,2) \mod 4$.
\item \label{i:opp:R0-2} \textbf{if} $R\{i,j\}\equiv (0,2) \mod 4$: rendezvous is achieved in one round.
\item \label{i:opp:R0-3} \textbf{if} $R\{i,j\}\equiv (0,3) \mod 4$: rendezvous is achieved in one round.
\item \label{i:opp:R1-1} \textbf{if} $R\{i,j\}\equiv (1,1) \mod 4$: we reach configuration $L\{i, j\}{\equiv} (1,1) \mod 4$.
\item \label{i:opp:R1-2} \textbf{if} $R\{i,j\}\equiv (1,2) \mod 4$: we reach configuration $L\{i{+}1, j{+}1\}{\equiv} (2,3) \mod 4$.
\item \label{i:opp:R1-3} \textbf{if} $R\{i,j\}\equiv (1,3) \mod 4$: we reach configuration $L\{i{+}1, j{+}1\}{\equiv} (0,2) \mod 4$.
\item \label{i:opp:R2-2} \textbf{if} $R\{i,j\}\equiv (2,2) \mod 4$: rendezvous is achieved in one round.
\item \label{i:opp:R2-3} \textbf{if} $R\{i,j\}\equiv (2,3) \mod 4$: rendezvous is achieved in one round.
\item \label{i:opp:R3-3} \textbf{if} $R\{i,j\}\equiv (3,3) \mod 4$: rendezvous is achieved in one round.
\end{enumerate}

In any case, rendezvous is achieved after at most three rounds.
\end{proof}

\begin{theorem}
Algorithm~\ref{algo:disoriented FSYNC} solves the SUIR problem with disoriented robots in FSYNC.
\end{theorem}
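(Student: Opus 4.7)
The plan is to reduce the non-rigid case to the rigid case already handled in Lemmas~\ref{lem: proof of algo, rigid, crash}, \ref{lem: proof of algo, correct, common}, and~\ref{lem: proof of algo, correct, opposite}. The key observation is that once the distance $d$ between the robots drops below $\delta$, every destination computed by Algorithm~\ref{algo:disoriented FSYNC}, either the midpoint at local distance $d/2$ or the other robot at local distance $d$, lies within $\delta$ of the executing robot, so the adversary can no longer stop the robot before it reaches that destination. Combined with the fact that every dictated move brings a robot strictly toward the other and can therefore only decrease $d$, the execution from the first round at which $d\leq\delta$ onward is indistinguishable from a rigid execution of Algorithm~\ref{algo:disoriented FSYNC} started in the current configuration.

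First I would prove that such a round exists. By Lemma~\ref{lem:distance decreases by fixed distance}, every two rounds $d$ decreases by at least $\min(\delta,d/2)$; while $d>2\delta$ this decrement is $\delta$, so after at most $2\lceil d_0/\delta\rceil$ rounds we reach $d\leq 2\delta$, and one further application with $d\in(\delta,2\delta]$ yields a decrement of at least $d/2\geq\delta/2$, hence $d\leq\delta$ after two more rounds. Starting from that round~$t$, I would apply the appropriate rigid-case lemma to the suffix of the execution: if one of the robots has crashed, Lemma~\ref{lem: proof of algo, rigid, crash} concludes rendezvous within four additional rounds; otherwise, Remark~\ref{rem:orientation common or opposite} tells us that the two robots share a common orientation of the line joining them throughout the execution, or have opposite orientations throughout, so Lemma~\ref{lem: proof of algo, correct, common} or Lemma~\ref{lem: proof of algo, correct, opposite}, respectively, yields rendezvous within three additional rounds.

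The main obstacle is justifying the reduction rigorously. One must check that (i) $d$ is monotonically non-increasing under Algorithm~\ref{algo:disoriented FSYNC}, so that once $d\leq\delta$ this property is preserved; (ii) the crash status and the common-versus-opposite orientation of the line are invariants of the whole execution, hence transfer unchanged to the suffix starting at round $t$ (the former by the assumption, made without loss of generality in the model section, that a crash occurs at the start of the execution; the latter by Remark~\ref{rem:orientation common or opposite}); and (iii) once $d\leq\delta$, every destination prescribed by the algorithm lies within $\delta$ of the executing robot, so the adversary cannot interrupt the move. Each of these checks is immediate from the algorithm and the model definitions.
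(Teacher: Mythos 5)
Your proposal is correct and follows essentially the same route as the paper's own proof: apply Lemma~\ref{lem:distance decreases by fixed distance} to drive the inter-robot distance below $\delta$, observe that from then on all prescribed moves are effectively rigid, and conclude via Lemma~\ref{lem: proof of algo, rigid, crash}, Lemma~\ref{lem: proof of algo, correct, common}, or Lemma~\ref{lem: proof of algo, correct, opposite} according to the crash/orientation case. You merely spell out the explicit round bound and the invariance checks (monotonicity of $d$, persistence of the orientation relation) that the paper leaves implicit.
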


\begin{proof}
By Lemma~\ref{lem:distance decreases by fixed distance}, the distance between the two robots decreases by at least $\min(\delta, d/2)$ every two rounds. Hence, eventually, robots are at distance smaller than $\delta$ from one another and, from this point in time, movements are rigid. Assume now that movements are rigid.
If a robot crashes, then the rendezvous is achieved by using Lemma~\ref{lem: proof of algo, rigid, crash}. Otherwise, depending on whether the robots have a common orientation or opposite orientation of the line joining them (see Remark~\ref{rem:orientation common or opposite}), the Theorem follows either by using Lemma~\ref{lem: proof of algo, correct, common} or by using Lemma~\ref{lem: proof of algo, correct, opposite}.
\end{proof}


\section{Concluding remarks}

We considered the problem of Stand Up Indulgent Rendezvous (SUIR). Unlike classical rendezvous, the SUIR problem is unsolvable in SSYNC even with the strongest assumptions: robots share a common $x-y$ coordinate system, and have access to infinite persistent memory. We demonstrate that it is nevertheless solvable in FSYNC without \emph{any} additional assumptions. 
A natural open question is related to the optimality (in time) of our algorithm. 

Also, we would like to investigate further the possibility of deterministic strong gathering for $n\geq 3$ robots. It is known that executing a single robot at a time in SSYNC is insufficient~\cite{DefagoGMP06,DefagoPCMP16}, but additional hypotheses may make the problem solvable.



\bibliographystyle{splncs04}
\bibliography{biblio}

\newpage


\end{document}